\documentclass[footinbib,prl,twocolumn,amsmath,amssymb,aps]{revtex4-1}
\usepackage{graphicx}
\usepackage{hyperref}

\hypersetup{
     colorlinks   = true,
     linkcolor    = blue,
     citecolor    = blue
}

\usepackage{amsthm}
\usepackage{physics}

\newtheorem{proposition}{Proposition}
\pdfminorversion=6

\begin{document}

\title{Topological order and criticality in (2+1)D monitored random quantum circuits}
\author{ {$\rm Ali \ Lavasani^{1,2}$}, {$\rm Yahya \ Alavirad^{1,2,3}$, $\rm Maissam \ Barkeshli^{1,2}$}}
\affiliation{$^1$Condensed Matter Theory Center, University of Maryland, College Park, Maryland 20742, USA}
\affiliation{$^2$Joint Quantum Institute, University of Maryland, College Park, Maryland 20742, USA}
\affiliation{$^3$Department of Physics, University of California at San Diego, La Jolla, CA 92093, USA}

\begin{abstract}
It has recently been discovered that random quantum circuits provide an avenue to realize rich entanglement phase diagrams, which are hidden to standard expectation values of operators. Here we study (2+1)D random circuits with random Clifford unitary gates and measurements designed to stabilize trivial area law and topologically ordered phases. With competing single qubit Pauli-Z and toric code stabilizer measurements, in addition to random Clifford unitaries, we find a phase diagram involving a tricritical point that maps to (2+1)D percolation, a possibly stable critical phase, topologically ordered, trivial, and volume law phases, and lines of critical points separating them. With Pauli-Y single qubit measurements instead, we find an anisotropic self-dual tricritical point, with dynamical exponent $z \approx 1.46$, exhibiting logarithmic violation of the area law and an anomalous exponent for the topological entanglement entropy, which thus appears distinct from any known percolation fixed point. The phase diagram also hosts a measurement-induced volume law entangled phase in the absence of unitary dynamics.
\end{abstract}

\maketitle

\emph{Introduction} - In the past few years, it has been realized that the interplay between measurements and unitary dynamics can give rise to rich physics in the dynamics of quantum entanglement~
\cite{skinner2019measurement,li2018quantum,chan19,PhysRevB.100.134306,gullans2019dynamical,PhysRevLett.125.070606,ludwing19tensor,altman19error,Szyniszewski19weak,tang2020dmrg,ludwig20replica,deluca19fermion,Vasseur20tensor,altman19phasetheory,qi20blackhole,jed20numeric,vicari20ising,iaconis2020,fuji2020,lang20,lavasani20,sang2020measurement,xiao20emerge,nahum20,lunt20,szyniszewski2020universality,gullans2020quantum,nahum19perc,alberton2020trajectory,turkeshi2020measurement,fidkowski2020dynamical,vijay2020measurementdriven,fan2020self,ippoliti2020entanglement,li2020statistical,van2020entanglement,sh2020classical}.
Originally, it was shown that when $(1+1)$D random unitary dynamics are intercepted by local measurements at a rate $p$, the system can undergo a phase transition from a volume law entangled phase at $p<p_c$ to an area law entangled phase at $p>p_c$~\cite{skinner2019measurement,li2018quantum,chan19}. Importantly, these phase transitions are entirely hidden from simple expectation values of operators but are manifest in \textit{quantum-trajectory averaged} dynamics of entanglement measures, like the entanglement entropy (EE) ~\cite{skinner2019measurement, lavasani20,li2018quantum,PhysRevB.100.134306}.

 \begin{figure}
   \includegraphics[width=0.95\columnwidth]{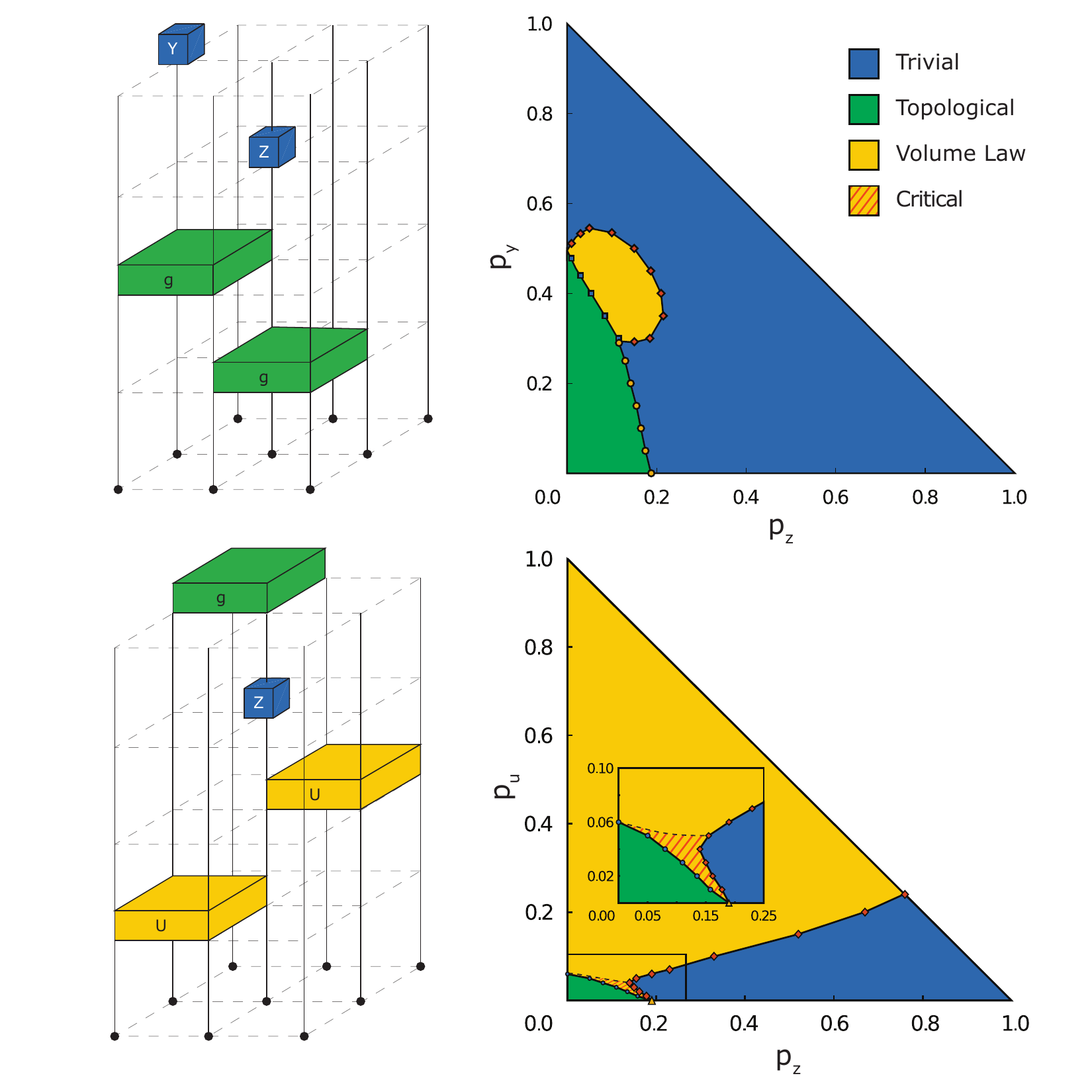}
   \caption{(a) A typical measurement-only random circuit. (b) Phase diagram of (2+1)D measurement-only random circuits. (c) A typical  hybrid random circuit. (d) Phase diagram of (2+1)D hybrid random circuits. (e) Entanglement dynamics at the $p_y=0$ line of  measurement-only random circuits (as well as the $p_u=0$ line of hybrid random circuits) maps to a classical bond percolation problem on a cubic lattice.}
   \label{fig_circuits}
 \end{figure}
Recently, it was shown that competing measurements can give rise to entanglement transitions even in the absence of unitary dynamics~\cite{nahum19perc,lavasani20,sang2020measurement,ippoliti2020entanglement}. Furthermore, it was discovered that distinct $(1+1)$D area law phases can remain well-defined in the context of random quantum circuits\cite{lavasani20,sang2020measurement}.

In this work, we consider a class of $(2+1)$D random quantum circuits that extrapolate between (1) a topologically ordered phase, characterized by non-zero topological entanglement entropy (TEE)~\cite{PhysRevLett.96.110404,levin06topo} and realized by measuring the $\mathbb{Z}_2$ toric code stabilizers~\cite{KITAEV20032}, (2) a volume law entangled phase realized by random Clifford unitaries and, (3) the trivial, area law phase realized by single-site measurements. As for the single-site measurements, we study both Pauli-Z and Pauli-Y measurements. This generalizes the work of Ref.~\cite{lavasani20} to $(2+1)$D where symmetry restrictions are not necessary. Similar to Ref.~\cite{lavasani20}, at each step of the circuit, an element corresponding to one of the three phases is selected at random with probability $p_g$, $p_u$, $p_s$ respectively (subject to the condition $p_g+p_u+p_s=1$).

Two typical arrangements of our circuits together with numerically calculated phase diagrams are shown in Fig.~\ref{fig_circuits}. We find stable topological, trivial area law, volume law (even without unitary dynamics), and critical phases. Notably, we also find evidence of several qualitatively distinct multi-critical points.

In the absence of unitary dynamics and in the case where single qubits are only measured in the Pauli-$Z$ basis, we find an exact analytical mapping between a $3$D classical bond percolation problem and the dynamics of entanglement.
 We show that EE of rectangular regions are related to the number of clusters shared between that region and the rest of the system in percolation picture.

On the other hand, when single qubits are only measured in the $Y$ direction,
we show that an Ising duality restricts the phase diagram. We find a novel tricritical point at the self-dual point $(p_z,p_y)=(0,0.5)$, separating topological, trivial area law, and volume law phases, in which the critical behavior of the system is qualitatively different from the rest of the phase diagram. Intriguingly, the circuit has non-trivial subsystem symmetries at this point.

Extensive numerical study of the phase diagram shows that \textit{away} from the self-dual point discussed above: (1) The critical points are area law entangled, similar to usual $(2+1)$D scale invariant field theories. The sub-leading correction to this area-law scaling also agrees with results found in a variety of $(2+1)$D scale invariant field theories~\cite{chen2015scaling,PhysRevB.85.165121,stephan2013entanglement,kallin2014corner} (in contrast to the results of Ref.~\cite{turkeshi2020measurement}). (2) We find a correlation length exponent $\nu=0.8(1)$ and a dynamical critical exponent $z=1$, which are set by the classical $3$D bond percolation theory. Within margin of error, these exponents stay constant along phase boundaries.

 However, the critical dynamics at the self-dual $p_z=0$, $p_y=0.5$ point is entirely distinct and characterized by: (1) Logarithmic corrections to the area law scaling of EE reminiscent of Fermi liquids. (2) Non-percolation correlation length exponent $\nu=0.47(8)$ and a dynamical critical exponent $z=1.46(6)$. (3) A non-zero anomalous $\gamma=1.0(2)$ exponent for the TEE (see Eq. \eqref{eq_scaling}).

 \emph{Models} - We consider $N=L^2$ qubits laid on the vertices of a two dimensional periodic square lattice of linear length $L_x=L_y=L$. Three different sets of gates are considered where each gate set, when applied exclusively, drives the system into one of distinct phases discussed above.

 For the topological phase, we consider measurements corresponding to toric code stabilizers,
 \begin{equation}\label{eq_gij}
   g_{i,j}=
   \begin{cases}
       X_{i,j}~X_{i+1,j}~X_{i,j+1}~X_{i+1,j+1} & i+j \text{ is even} \\
       Z_{i,j}~Z_{i+1,j}~Z_{i,j+1}~Z_{i+1,j+1} & i+j \text{ is odd}
    \end{cases},
 \end{equation}
 where $(i,j)$ denotes the coordinates and $X_{i,j}$ and $Z_{i,j}$ are the Pauli operators acting on the corresponding qubit. We denote the set of all $g_{i,j}$ operators as $\mathcal{M}_{g}$.

 For the trivial phase, we can pick any set of single qubit measurements. We use $\mathcal{M}_P$ to denote the set of single qubit Pauli-$P$ operators ($P$ could be either X,Y or Z).
For the volume law phase, we use the set $\mathcal{C}_4$ consisting of four qubit Clifford unitaries $U_{i,j}$, acting on neighboring qubits located at $(i,j)$, $(i+1,j)$, $(i,j+1)$ and $(i+1,j+1)$.

 We study two types of random circuits. First, we consider \emph{measurement-only random circuits} comprised of only measurements. More specifically, we start with the product state $\ket{0}^{\otimes N}$ and at each updating step, we measure an operator which is chosen uniformly at random from either $\mathcal{M}_Z$ with probability $p_z$, $\mathcal{M}_Y$ with probability $p_y$ or $\mathcal{M}_g$ with probability $p_g=1-p_z-p_y$. Each time step is defined as $N$ consecutive updating steps. A typical example of such a circuit is shown in Fig. \ref{fig_circuits}a.

We also consider \emph{hybrid random circuits}, which are comprised of unitary gates as well as measurements. We start with $\ket{0}^{\otimes N}$ and at each updating step we either apply a gate chosen uniformly at random from $\mathcal{C}_4$ with probability $p_u$ or measure an operator chosen uniformly at random from $\mathcal{M}_Z$ or $\mathcal{M}_g$ with probabilities $p_z$ and $p_g=1-p_u-p_z$ respectively.

 \emph{Order Parameters} - One can use TEE ~\cite{PhysRevLett.96.110404,levin06topo} denoted by $S_\text{topo}$ to distinguish phases.   $S_\text{topo}$ equals $1$ for the eigenstates of the toric code Hamiltonian while it is $0$ for quantum states in the trivial phase. As for the volume law phase, the contribution which is proportional to the size of each region cancels out and one may expect $S_\text{topo}$ to vanish in this phase as well. However, the $(1+1)$D results\cite{PhysRevB.100.134306,li2020statistical}
  suggest that the EE of a region has sub-extensive contributions ~\cite{fan2020self,li2020statistical} in the volume law phase, which results in a system-size dependent value for $S_\text{topo}$. Our numerical results support this scenario.

We also utilize the ancilla order parameter introduced in Refs.\cite{gullans2019dynamical,PhysRevLett.125.070606} which captures the transition in purification dynamics. It is defined using $N_a$ ancilla qubits in addition to the system qubits, as follows. First a random local Clifford unitary circuit of depth $O(N)$ is applied to the entire set (system + ancilla) of qubits, which results in a maximally entangled stabilizer state of all qubits. Next, the system qubits are evolved under the random quantum circuit of interest for $T$ time-steps and then the EE of the set of all ancilla qubits, denoted by $S_a(T)$, is measured. In the $T\to \infty$ limit, the ancilla system will be entirely disentangled from the system. However, this purification dynamics happens with a rate which depends on the phase of the system. In the trivial area law phase, the ancilla qubits will be disentangled in constant time, independent of the system size. In the topological phase, although the bulk disentangles in constant time, the logical qubits remain entangled with the ancilla system until a time exponentially large in system size. In the volume law phase, the bulk remains entangled with the ancilla qubits up to exponentially large time-steps. Therefore for large enough system sizes and at $T=\mathcal{O}(L)$, $S_a(T)$ will be $0$, $N_L$ and $N_a$ in trivial, topological and volume law phases respectively, where $N_L$ denotes the number of logical qubits in the topological phase ($N_L=2$ for the torus topology). We assumed that $N_L \ll N_a \ll L^2$. We use $N_a=10$ ancillas throughout this work.

  We note that while in our setting the purification transition occurs concurrently with TEE phase transition, they are not exactly the same\cite{gullans2019dynamical}. One can, for instance, repeat the study here on a triangulation of a 2-sphere, for which $N_L = 0$, so purification protocols cannot distinguish the trivial and topological phases, while TEE can.

\emph{Results} -
 We start by studying the phase diagram of the measurement-only circuits.
 First, we focus on the $p_y = 0$ line. Notably, as shown in the Supplemental Material, there is an exact mapping which maps the entanglement dynamics at this line of the phase diagram to a classical bond percolation problem on a $3$D cubic lattice. Fig. \ref{fig_yz_main}a and b show the TEE and the ancilla order parameter as a function of $p_z$. As can be seen from the plots, there exists a stable topological phase extending up to $p_c \approx 0.2$, at which point a continuous phase transition takes the system to the trivial phase.

 \begin{figure*}
   \includegraphics[width=\textwidth]{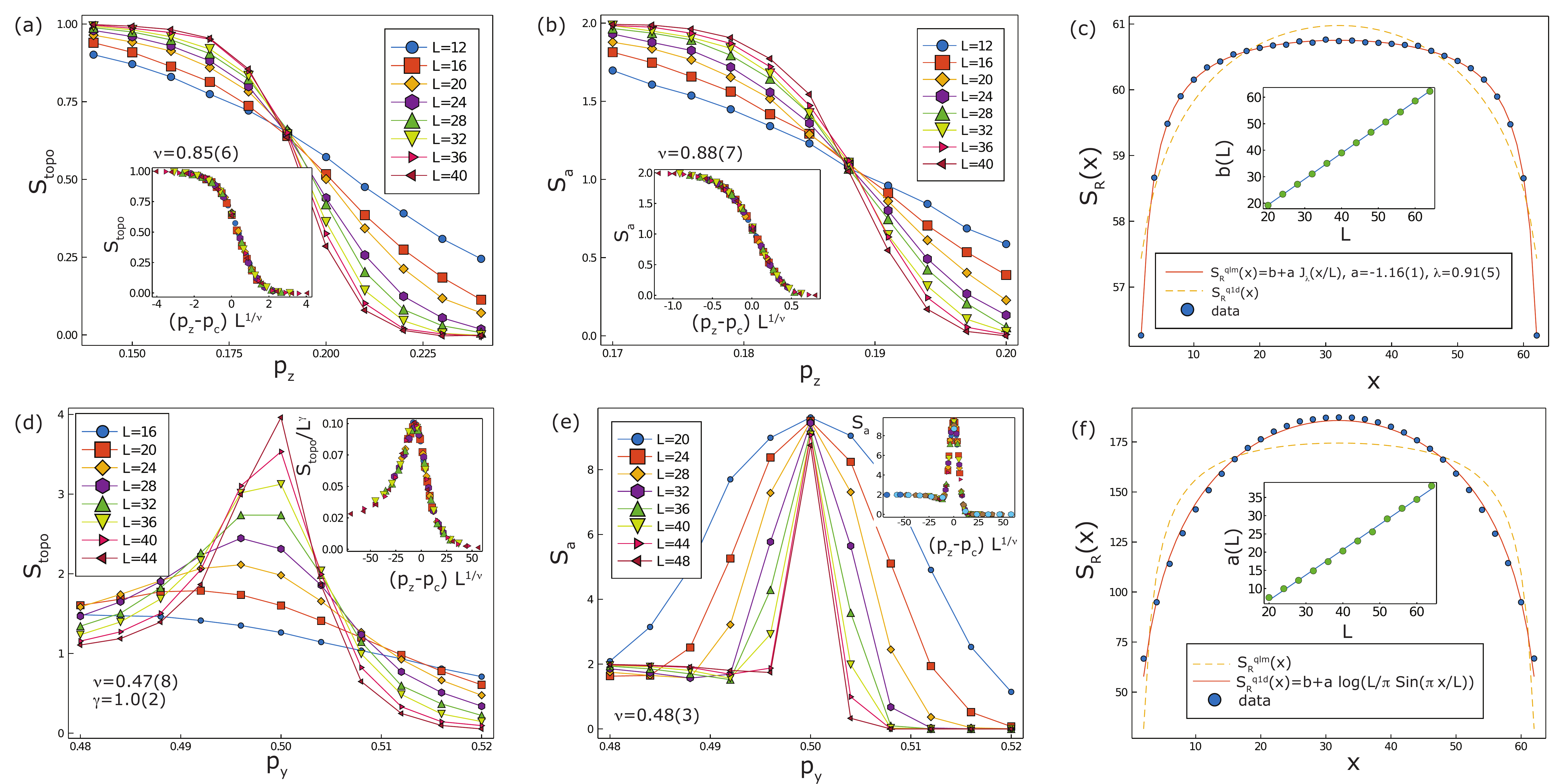}
   \caption{
   Phase transitions across the $p_y=0$ (top row) and $p_z=0$ (bottom row) lines of the phase diagram for the measurement-only circuit: (a) $S_\text{topo}$ and (b) $S_a$ measured at $t=4L$ versus $p_z$ for fixed $p_y=0$. Insets show the corresponding data collapse. (c) $S_R(x)$ for system size $L=64$ at the percolation critical point $(p_z,p_y)=(0.188 ,0)$, with the best fit of scaling functions $S^\text{qlm}(x)$ (solid line) and $S^\text{q1D}$ (dashed line). The inset is the best fit value of the $b$ parameter in Eq.\eqref{eq_s_qlm} as a function of $L$. (d) $S_\text{topo}$ and (e) $S_a$ measured at $t=0.6~L^{1.46}$
    versus $p_y$ for fixed $p_z=0$. Insets show the corresponding data collapse. (f) $S_R(x)$ for system size $L=64$ at the self-dual critical point $(p_z,p_y)=(0,0.5)$, with the best fit of scaling functions $S^\text{q1D}$ (solid line) and $S^\text{qlm}(x)$ (dashed line). The inset shows the linear dependence of the best fit value of the $a$ parameter in Eq.\eqref{eq_sr_1d}.
   }
   \label{fig_yz_main}
 \end{figure*}

 On general grounds, we may assume the following scaling forms governing the order parameters near the phase transition
 \begin{align}\label{eq_scaling}
   &S_\text{topo}(p;L)=L^{\gamma} F((p-p_c) L^{1/\nu}), \\
  &S_a(p,t,L)=G((p-p_c) L^{1/\nu},t/L^z), \label{sq_scaling_Sa}
 \end{align}
 where $F(x)$ and $G(x)$ are arbitrary functions and $\nu$ and $z$ are the correlation length critical exponent and dynamical critical exponent respectively. We find our data for the percolation critical point to be consistent with setting $\gamma$ to $0$. By collapsing $S_\text{topo}$ near the critical point for different system sizes, we find $p_c=0.188(2)$ and $\nu=0.85(6)$. Note that $\nu$ is consistent with the values obtained from numerical simulation of classical percolation in $3$D\cite{PhysRevE.87.052107}. By investigating the time dependence of the ancilla order parameter $S_a$ at $p=p_c$,
  we find it to be consistent with $z=1$ (see Supplemental Material for relevant plots). Collapsing $S_a$ at $t=O(L)$ then yields $\nu=0.88(7)$, in agreement with the value found via collapsing $S_\text{topo}$.

 Another quantity of interest is the scaling form of the EE with sub-system size at the critical point.
  We consider the cylindrical region $R$ with a smooth boundary, which has length $x$ in one direction and goes all the way around the torus in the other direction. Let $S_R(x)$ denote its EE. Note that the boundary length $|\partial R|$ is $2L$, independent of $x$. As is discussed in the Supplemental Material, in the percolation picture this quantity is related to the number of clusters with shared support on region $R$ and its complement.

  For a conventional CFT in $(2+1)$D, the non-universal leading area-law term scales with $|\partial R|=2L$. The sub-leading term for a cylindrical subregion is less well-understood  and several forms have been suggested, among which two are of particular interest. One is a quasi-$(1+1)$D scaling function, inspired by the exact form found in $(1+1)$D CFTs, which seems to decently capture EE scaling in certain (2+1)D gapless models\cite{PhysRevB.85.165121}
\begin{equation}\label{eq_sr_1d}
  S^{\text{q1d}}_R(x)=b+a~\log(\frac{L}{\pi}\sin(\frac{\pi~x}{L})),
\end{equation}
where $b$ contains the non-universal area law term. The other relevant scaling form was originally derived for the quantum Lifshitz model\cite{stephan2013entanglement} but was found to describe the EE scaling in various other (2+1)$D$ gapless models as well, including some (2+1)D CFTs \cite{chen2015scaling}.
\begin{align}\label{eq_s_qlm}
  &S^{\text{qlm}}_R(x)=b+a J_\lambda(x/L)\\
  &J_\lambda(u)=\log\qty(\frac{\theta_3(i \lambda u) \theta_3(i \lambda (1-u))}{\eta(2iu)\eta(2i(1-u))}),
\end{align}
where $\theta_3(z)$ and $\eta(z)$ are the Jacobi theta function and the Dedekind eta function respectively (see Supplemental Material for definitions). $b$ contains the non-universal area-law contribution and $\lambda$ is a model parameter, which we will use to find the best fit.

Fig. \ref{fig_yz_main}c shows $S_R(x)$ for system size $L=64$ at $p_c$ alongside the best fit of the scaling functions. As can be seen from the graph, $S^\text{qlm}_R(x)$ results in a good fit (solid line), while $S_R^\text{q1D}$ cannot capture the scaling form. Moreover, we find that the best fit values of $a=-1.16(1)$ and $\lambda=0.91(5)$ for $S^{qlm}_R$ remain constant for different system sizes within the margin of error (see Supplemental Material). As is shown in the inset, the $b$ parameter scales linearly with system size, which shows that the leading term scales with $|\partial R|$.

We now turn our attention to the $p_z=0$ line. Here the circuit has a self-duality mapping $p_y \rightarrow 1 - p_y$. Note that along this line, the system has $2L$ subsystem symmetries generated by the product of $Y$ (or stabilizer) operators along horizontal or vertical loops, e.g. $\prod_j Y_{i,j}$. On a related note, there is a unitary transformation which maps the $g$ and $Y$ operators to the gauge operators of the $2D$ Bacon-Shor subsytem code\cite{bacon2006operator,shor1995scheme} on a square lattice (see the Supplemental Material for details).

 By examining the TEE $S_\text{topo}$(Fig. \ref{fig_yz_main}d), we find the topological phase to be extended up to the self-dual point $p_y=0.5$. However, at $p_y=0.5$, $S_\text{topo}$ grows with system size, which suggests a non-zero $\gamma$ exponent. Collapsing $S_\text{topo}$ data near the critical point yields $p_c=0.502(1)$, $\gamma=1.0(2)$ and $\nu=0.47(8)$, which shows that this critical point is distinct from the percolation fixed point.
Moreover, by looking at the time dependence of the ancilla order parameter at $p_y=0.5$, we find that, in contrast to the percolation critical point, the best fit to the scaling form in Eq. \ref{eq_scaling} corresponds to $z=1.46(8)$ (see Supplemental Material for relevant plots). Accordingly, by collapsing $S_a(p,t,L)$ data at $t=O(L^{1.46})$ (Fig. \ref{fig_yz_main}e), we find $\nu=0.48(3)$, in agreement with the result obtain from collapsing $S_\text{topo}$.

As for the cylindrical subregion EE $S_R(x)$ (Fig. \ref{fig_yz_main}f), we find that the quasi-1d scaling form $S^\text{q1d}(x)$ -- rather than $S^\text{qlm}(x)$ -- fits the data. However, as is shown in the inset, the $a$ parameter in Eq.\eqref{eq_s_qlm} is not constant, but has a linear dependence on system size $L$, demonstrating that the leading term scales as $L \log L$ rather than $L$ as is expected in an area law state. The origin of the $L \log L$ violation is unclear; it may be related to the existence of subsystem symmetries along the $p_z = 0$ axis, which translates to the stabilizers of the Bacon-Shor code under the aforementioned duality map.

 The rest of the phase diagram can be determined analogously (Fig. \ref{fig_circuits}a). We find that the percolation critical point is part of a critical line that persists up to some finite non-zero value of $p_y$, while the self-dual critical point at $(p_z,p_y)=(0,0.5)$ splits into two critical lines with an intermediate volume law entangled phase in between, making it tricritical. Interestingly, the numerical data for all critical points that we considered, other than $(p_z,p_y)=(0,0.5)$, are consistent with $z=1$ and $\gamma=0$, with $\nu$ remaining close to $0.8$, similar to the percolation critical point. Their EE scaling is given by $S^\text{qlm}(x)$ as well, with an area law scaling leading term. Remarkably, this makes the self-dual point special in this regard, as it is the only point in the phase diagram with $L \log L$ violation of area law, as well as quite different $\nu$ and $\gamma$ exponents.
  We also note that the extracted $a$ and $\lambda$ parameters in $S^\text{qlm}(x) $ change throughout the phase diagram. The relevant plots can be found in the Supplemental Material.

  lastly we present the numerical results for the hybrid random circuit which has unitary dynamics. The $p_u=0$ line of phase diagram is exactly the same as $p_z=0$ line of the measurement-only random circuit. Fig.~\ref{fig_pu001}a shows the ancilla order parameter along $p_u=0.01$, which signals the emergence of an intervening phase between topological and trivial phases, suggesting that the percolation critical point is actually a tricritical point in this phase diagram. In the intermediate phase, $S_a$ does not saturate to $N_a = 10$, as is expected to be the case in the volume law phase, but rather increases weakly with system size, showing indications that it may saturate at a finite value less than $10$. Indeed, for a point in the intermediate phase and for large systems, $S_a(t;L)$ seems to be a function of only $t/L$ (Fig. \ref{fig_pu001}b) which is a signature of a critical phase with $z=1$ (see Eq. \ref{sq_scaling_Sa}). Moreover, we find that in the intermediate phase, $S^\text{qlm}(x)$ fits the EE of a cylindrical subregion as well. These points suggest that the intermediate phase is a critical region. Nonetheless, we remark that the observed behavior could be just related to finite size effects and the proximity to the critical lines.

The critical region extends to the $p_u$ axis, ending at $p_u\approx 0.06$, which appears to be a tricritical point, although within the precision of this study, we cannot rule out the existence of a narrow critical region around $p_u = 0.06$. By collapsing the $S_a$ data along the $p_u$ axis, we find $p_{c}=0.059(1)$ with critical exponent $\nu=0.78(8)$.  On the right, the critical region ends on the boundary of the trivial phase and the volume law phase. The trivial phase itself ends at $p_u=0.238(2)$ along the $p_u+p_z=1$ line. We find  $\nu=0.80(7)$ at the corresponding phase transition  (see the Supplemental Material for relevant plots). The overall phase diagram of the hybrid circuit in $2$D is illustrated in Fig. \ref{fig_circuits}b.
 \begin{figure}
   \includegraphics[width=\columnwidth]{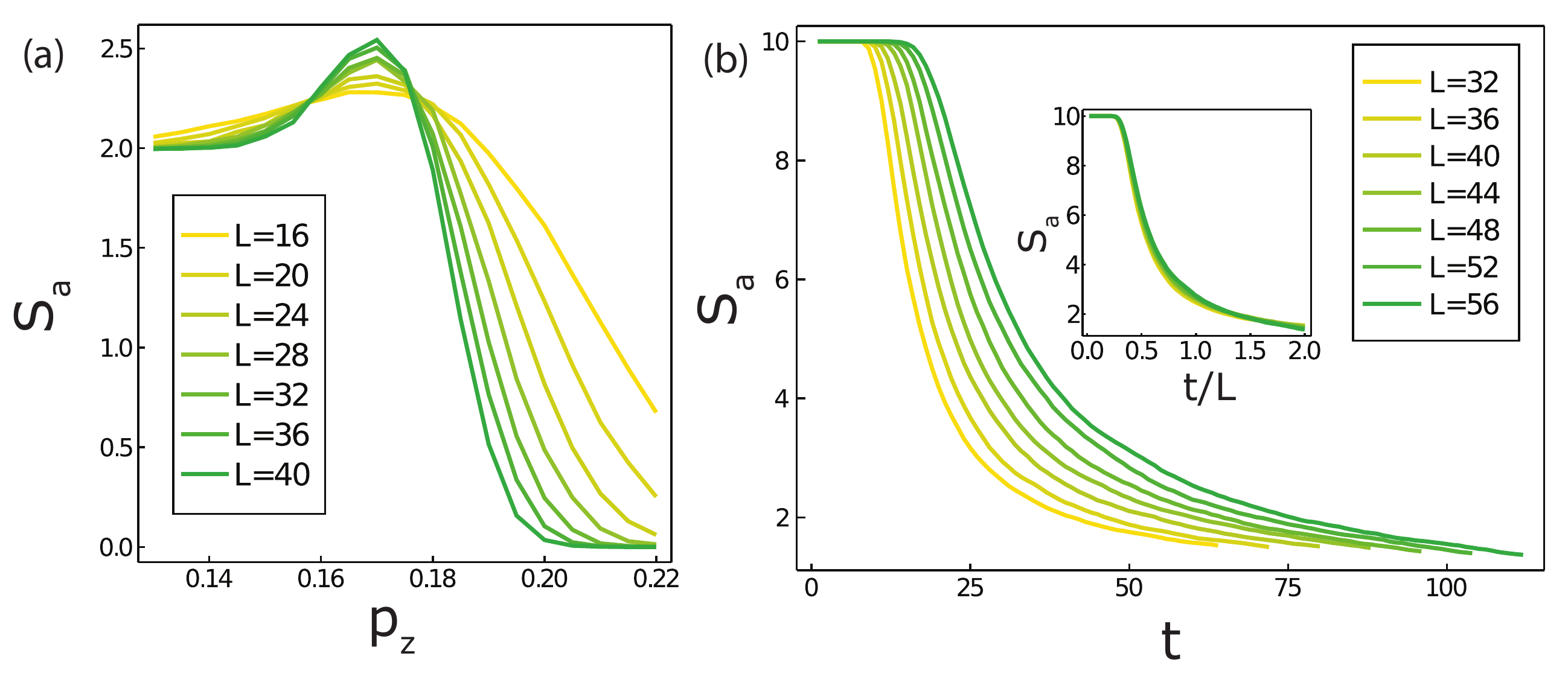}
   \caption{(a) The ancilla EE $S_a$ measured at $t=L$ as a function of $p_z$ for fixed $p_u=0.01$ in the hybrid random circuit. (b) $S_a$ as a function of time at $(p_z,p_u)=(0.17,0.01)$ in the hybrid random circuit. The inset is the same, plotted as a function of $t/L$.}
   \label{fig_pu001}
 \end{figure}

\emph{Acknowledgements} - We thank David Huse and Michael Gullans for discussions. The  authors  acknowledge  the  University  of  Maryland  supercomputing  resources (http://hpcc.umd.edu)  made  available  for  conducting  the  research reported in this paper. A.L and M.B are supported by NSF CAREER (DMR- 1753240) and Alfred P. Sloan Research Fellowship. Y.A is supported by Simons Collaboration on Ultra-Quantum Matter, grant 651440 from the Simons Foundation as well as University of California  Laboratory Fees Research Program, grant LFR-20-653926.

\bibliographystyle{apsrev4-1.bst}
\bibliography{refs.bib}

\clearpage
%
%
%
%
%
%
%
%
%

\appendix
\section{Supplemental Material}

\subsection{Percolation Mapping}\label{perc}

In this section we present the mapping between the random quantum circuit with only stabilizers and single qubit $Z$ measurements to a classical bond percolation problem on the 3D cubic lattice. The quantum circuit could be viewed as either the $p_y=0$ line of the projective random quantum circuit or the $p_u=0$ line of the hybrid random quantum circuit.

For simplicity, we consider the infinite plane geometry so we can ignore the non-trivial cycles of the torus as well as the subtleties arising near the boundary. Moreover, we work with the standard version of toric code where the qubits are on the edges and the $X$ and $Z$ stabilizers correspond to the star and plaquette operators respectively.

We take the initial state to be the eigenstate of all the star and plaquette operators. Note that since single qubit measurements are only in $Z$ direction, the system will remain to be an eigenstate of plaquette operators. Therefore, we ignore all plaquette operator measurements in what follows.

We place a square on each vertex which represents the corresponding star operator (Fig.\ref{fig_percolation}a). We start with all squares having a unique color. At each time step, the colors are updated as follows
\begin{itemize}
  \item Whenever a star operator  is measured, it acquires it own unique color (Fig.\ref{fig_percolation}b)
  \item Whenever a single qubit is measured, the two adjacent squares acquire the same color (Fig.\ref{fig_percolation}c)
\end{itemize}
As we will show, these rules represent how the stabilizer state of the system evolves. On the other hand, these set of rules arise naturally in a percolation picture. Consider a $3D$ cubic lattice, henceforth called the percolation lattice, and place a square on each vertex. The vertical direction corresponds to time. If a star operator is \textit{not} measured at time $t$, we connect the corresponding squares on the $(t-1)$'th and $t$'th layers. And if a qubit is measured at time $t$, we connect the corresponding adjacent squares on the $t$'th layer. At the end, we assign a unique color to each cluster of connected squares. The coloring that will arise is exactly what one would have found if the rules listed above were followed (see Fig.\ref{fig_percolation}).

The following proposition shows how the stabilizers specifying the state can be inferred from a given coloring configuration.
\begin{proposition}\label{prp_colorCoding}
  Let $A^j=\{s_i\}_{i=1}^n$ denote the set of squares corresponding to $j$'th color, in some arbitrary order. Up to a minus sign, the following operators stabilize the state:
  \begin{equation}\label{eq_stabs}
    \overline{\mathcal{X}}_j=\prod_{i=1}^n \mathcal{X}_{s_i}\qquad \text{and} \qquad Z_{s_i,s_{i+1}},
  \end{equation}
  where $\mathcal{X}_{s_i}$ is the star operator at square $s_i$ and $Z_{s_i,s_{i+1}}$ is the Pauli $Z$ string operator which starts on $s_i$ and ends on $s_{i+1}$.
\end{proposition}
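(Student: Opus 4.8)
The plan is to proceed by induction on the sequence of measurements, maintaining the invariant that the stabilizer group $\mathcal{S}$ of the state is generated, up to signs, by the plaquette operators $B_p$ together with the operators in~\eqref{eq_stabs} read off from the current coloring. I would first record two facts that hold throughout the evolution. (i) Every plaquette operator stays a stabilizer, since both a single-qubit $Z_e$ and every star operator $\mathcal{X}_v$ commute with every $B_p$; consequently the measured plaquette operators can indeed be ignored. (ii) Each string $Z_{s_i,s_{i+1}}$ is defined only up to its connecting path, but two choices differ by a product of plaquettes and hence by (i) represent the same element of $\mathcal{S}$. Using (ii), I would note that modulo plaquettes the subgroup generated by $\{Z_{s_i,s_{i+1}}\}_i$ inside a class $A^j$ depends only on the set $A^j$ and equals $\langle Z_{s,s'}: s,s'\in A^j\rangle$, since $Z_{s,s'}=Z_{s,s_i}Z_{s_i,s_{i+1}}\cdots Z_{s_k,s'}$ up to closed loops; this removes the arbitrary ordering from the statement. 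The base case is immediate: initially each square has its own color, so every class is a singleton with $\overline{\mathcal{X}}_j=\mathcal{X}_{s_v}$ a single star and no strings, matching the all-$\mathcal{X}_v$ (and all-$B_p$) eigenstate. For the step I would invoke the standard stabilizer update: measuring a Pauli $M$ that commutes with all generators leaves $\mathcal{S}$ unchanged, whereas if $M$ anticommutes with some generators one removes a single anticommuting $g_1$, inserts $\pm M$, and multiplies every other anticommuting generator by $g_1$.

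For a single-qubit measurement $Z_e$ with $e=(u,v)$, I would observe that $Z_e$ anticommutes with $\overline{\mathcal{X}}_j$ precisely when $A^j$ contains exactly one endpoint of $e$, because $\mathcal{X}_u$ and $\mathcal{X}_v$ each contribute an $X_e$ that cancels in pairs, while all strings and plaquettes commute with $Z_e$. If $u,v$ already share a color, the two $X_e$'s cancel, $Z_e$ commutes with everything, and in fact $Z_e\in\mathcal{S}$ (since $Z_{u,v}\in\mathcal{S}$ and $Z_eZ_{u,v}$ is a closed loop, i.e.\ a product of plaquettes); the state and the coloring are both unchanged. If $u,v$ lie in distinct classes $A^j,A^k$, then $Z_e$ anticommutes with exactly $\overline{\mathcal{X}}_j$ and $\overline{\mathcal{X}}_k$; taking $g_1=\overline{\mathcal{X}}_j$, the update replaces these by $\pm Z_e$ and $\overline{\mathcal{X}}_j\overline{\mathcal{X}}_k=\prod_{s\in A^j\cup A^k}\mathcal{X}_s$. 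As $Z_e=Z_{u,v}$ is exactly the string linking the two chains, the new generators are those of the merged class, matching the rule of Fig.\ref{fig_percolation}c once the ordering-independence above is used to identify the two generating sets.

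For a star measurement $\mathcal{X}_v$, I would order the class $A^j$ containing $v$ so that $v=s_1$. Then $\mathcal{X}_v$ commutes with all combined stars and all plaquettes and anticommutes with $Z_{s_i,s_{i+1}}$ iff $v$ is one of its endpoints, which for this ordering is the single generator $Z_{s_1,s_2}$. The update replaces $Z_{s_1,s_2}$ by $\pm\mathcal{X}_v$, leaving $\overline{\mathcal{X}}_j=\prod_{i=1}^n\mathcal{X}_{s_i}$, the strings $Z_{s_2,s_3},\dots,Z_{s_{n-1},s_n}$, and all other generators intact. Since $\mathcal{X}_v$ is now a generator, $\langle\overline{\mathcal{X}}_j,\mathcal{X}_v\rangle=\langle\mathcal{X}_v,\overline{\mathcal{X}}_j\mathcal{X}_v\rangle$, so the group is precisely that of the split coloring: the singleton $\{s_v\}$ with combined star $\mathcal{X}_v$ and no string, together with the reduced class $\{s_2,\dots,s_n\}$ with combined star $\overline{\mathcal{X}}_j\mathcal{X}_v$ and the remaining strings. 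This reproduces the rule that a measured star takes a fresh color (Fig.\ref{fig_percolation}b), closing the induction.

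The main obstacle I anticipate is bookkeeping these ambiguities rather than the physics: the statement fixes neither the ordering of squares within a class nor the path of each string, so the listed generators are not canonical. The clean route is therefore to phrase the invariant as an equality of \emph{generated groups modulo plaquettes}, and to exploit the freedom to reorder each class so that the standard update touches exactly one generator, making the correspondence with the merge and split coloring rules transparent. The sign ambiguity is harmless, being exactly the ``up to a minus sign'' permitted by the proposition and fixed by the unrecorded measurement outcomes.
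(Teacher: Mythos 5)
Your proposal is correct and follows essentially the same route as the paper's proof: induction over measurement steps, with the same case analysis (star measurement splitting a color class, single-qubit $Z$ measurement merging two classes, and the trivial cases where the coloring is unchanged). The only difference is presentational — you invoke the standard stabilizer-measurement update rule explicitly and are more careful about the ordering/path ambiguities of the string generators, which the paper handles informally.
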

Note that, because the quantum state is an eigenstate of every plaquette operator and because the space manifold has trivial topology, we don't need to specify the exact path that $Z_{s_i,s_{i+1}}$ string operator takes.
\begin{figure}
  \includegraphics[width=\columnwidth]{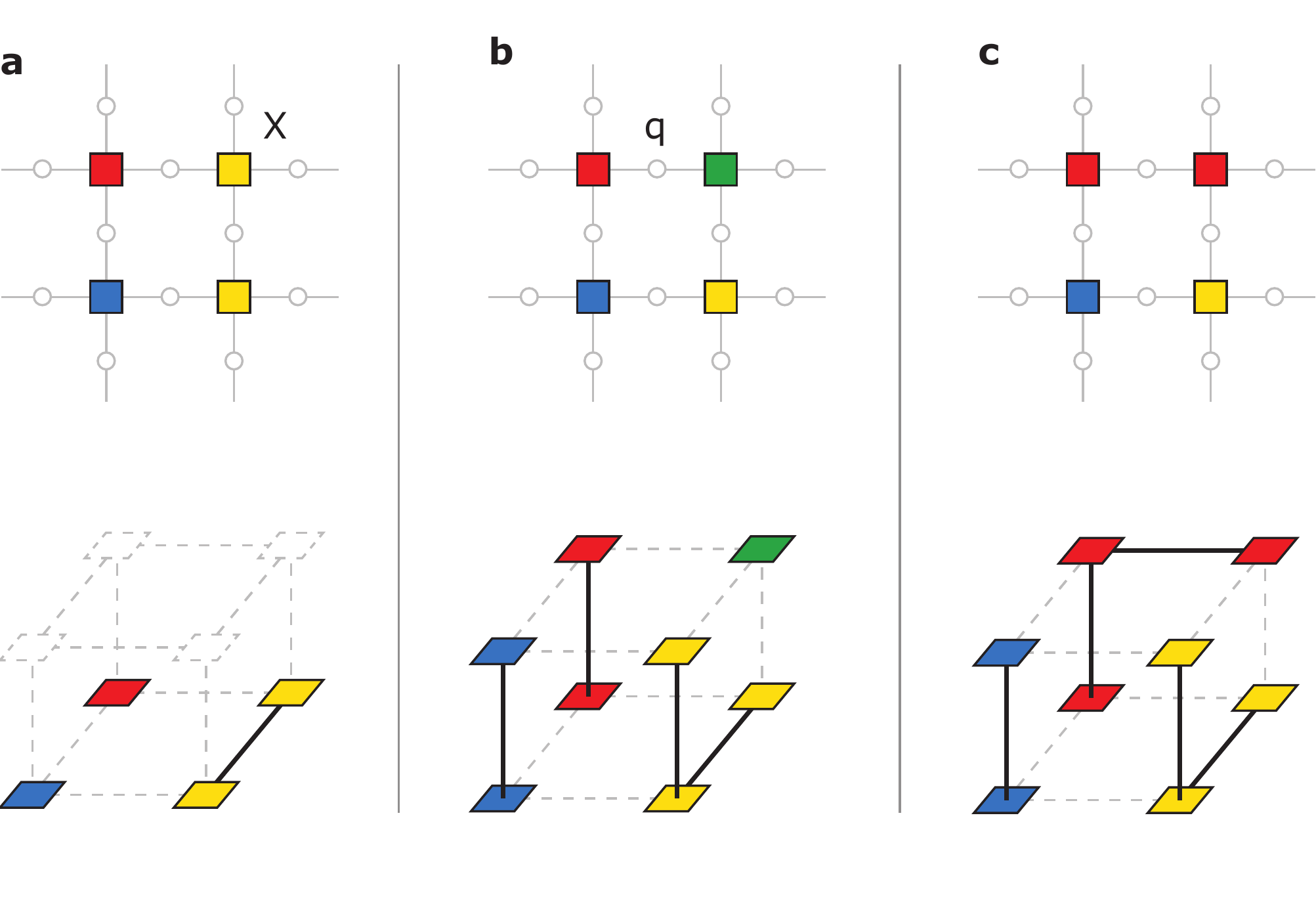}
  \caption{Diagrammatic representation of coloring rules outlined in the text arise naturally in the $3$D percolation picture. (a) The initial coloring of the squares (top) and the corresponding state in the percolation lattice (bottom).  (b) The state of the system after the star stabilizer marked by a $X$ in panel "a" is measured. (c) the state of the system after the qubit marked by $q$ in panel b is measured in $Z$ basis. }
  \label{fig_percolation}
\end{figure}
\begin{proof}[Proof of Proposition \ref{prp_colorCoding}]
We prove by induction. At $t=0$ it is clearly the case.

Now assume it is true at step $m$. Let's say we measure the star operator at some vertex $s_1$ at step $m+1$. If the corresponding square has a unique color already, nothing happens and the statement holds trivially afterwards. So let us consider the case where there are more than one squares with the same color as $s_1$, denoted by the set $A^1$. Based on the induction assumption, the quantum state is an eigenstate of $\prod_{s\in A^1} \mathcal{X}_s$ operator at step $m$. Therefore, after measuring $\mathcal{X}_{s_1}$,
 the system will be an eigenstate of $\mathcal{X}_{s_1}$ as well as $\mathcal{X}_{s_1}\times\prod_{s\in A^1}\mathcal{X}_{s}=\prod_{s\in A^1\setminus \{s_1\}} \mathcal{X}_{s}$. Moreover, any $Z_{i,j}$ string operator that doesn't end on $s_1$ commutes with $\mathcal{X}_{s_1}$ and as such, measurement of $\mathcal{X}_{s_1}$ has no bearing on the system being an eigenstate of it or not.
  As for the string operator stabilizers starting or ending on $s_1$, one of them will be replaced with $\mathcal{X}_{s_1}$ stabilizer, and the other with the product of the two which will be an string operator starting and ending on the set $A^1\setminus \{s_1\}$. Ergo the statement holds at step $m+1$ as well.

 Now consider the case in which, at step $m+1$, we measure a single qubit -- say  qubit $q$ which is between squares $s_1$ and $s_2$ -- in the $Z$ basis. If $s_1$ and $s_2$ have the same color, it results from the induction hypothesis that $Z_q$ already stabilizes the state and the statement trivially holds at step $m+1$. So let us consider the case where $s_1$ and $s_2$ have different colors,  such that $s_1 \in A^1$ and $s_2 \in A^2 $. Since $\prod_{s\in A^1 \cup A^2}\mathcal{X}_s$ stabilizes the state at step $m$ and commutes with $Z_q$, it stabilizes the state at step $m+1$ as well. Based on the induction assumption, the state is stabilized by any $Z_{i,j}$
 string that has both end points either on $A^1$ or $A^2$. Moreover, it is now stabilized by the string operator which connects $s_1\in A^1$ and $s_2 \in A^2$, namely $Z_q$. Therefore the state of the system at step $m+1$ is stabilized by any $Z_{i,j}$ string that has end points on $A^1 \cup A^2$. Hence the statement holds at step $m+1$ as well.
\end{proof}
Proposition \ref{prp_colorCoding} allows us to study the entanglement structure of the steady state, using the percolation picture. In particular, the entanglement entropy of a rectangular region $A$ is equal to the number of clusters which have shared support in $A$ and $A^c$, minus one.

\begin{figure}[t]
\includegraphics[width=0.8\columnwidth]{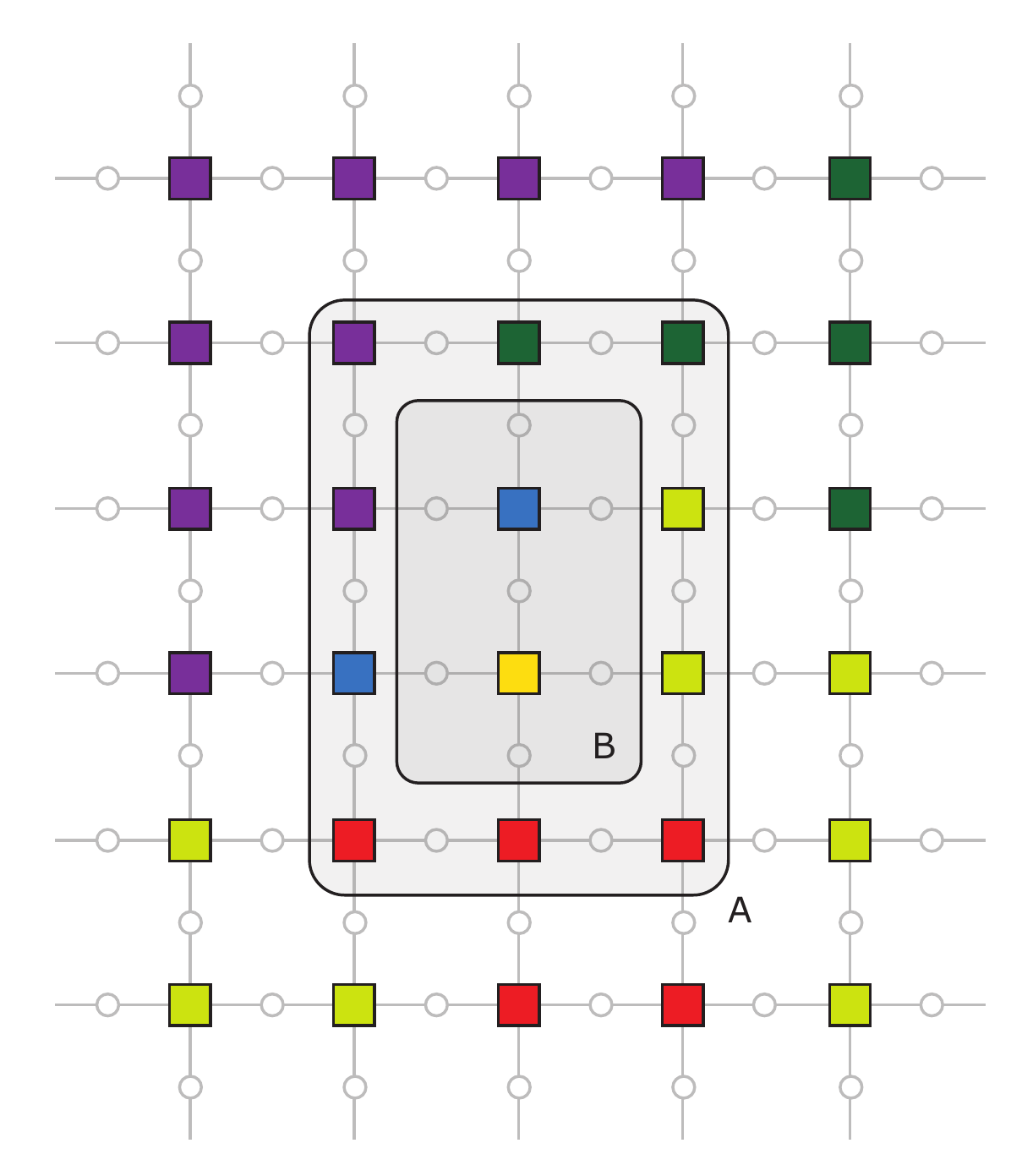}
\caption{An example of a rectangular region $A$ with sides $l_x=2$ and $l_y=3$. Its bulk is marked by $B$. In the example shown above, $C_A$ and $C^\text{ex}_B$ have $6$ and $1$ elements respectively.}
\label{fig_entropyofpercolation}
\end{figure}

\begin{proposition}\label{prp_perc_entropy}
  Let $A$ be a rectangular region with smooth boundary and let $B=A\setminus \partial A$ denote its bulk (see Fig.\ref{fig_entropyofpercolation}). For a given coloring configuration of the squares (star operators), let $C_A$ denote the set of colors that appear inside $A$. Also define $C^\text{ex}_B$ to be the colors that exclusively appear in the bulk. Then, for the stabilizer state which is associated with that coloring, the entanglement entropy of the subset of qubits inside $A$ can be expressed as,
  \begin{align}
  S_A=|C_A \setminus C^\text{ex}_{B}| -1.
  \end{align}
\end{proposition}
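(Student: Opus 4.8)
The plan is to reduce the claim to a counting problem via the standard identity for the entanglement entropy of a stabilizer state, $S_A = |A| - g_A$, where $|A|$ is the number of qubits (edges) inside $A$ and $g_A$ is the number of independent generators of the subgroup of stabilizers supported entirely on $A$. Because every generator furnished by Proposition~\ref{prp_colorCoding} (together with the plaquette operators, which also stabilize the state) is either purely $X$-type (the $\overline{\mathcal X}_j$) or purely $Z$-type (the plaquettes and the strings $Z_{s_i,s_{i+1}}$), the group is CSS and the count splits as $g_A = g_A^X + g_A^Z$. I would compute the two pieces separately and show the $|A|$'s cancel, leaving $S_A = \operatorname{rank}(\text{$Z$-constraints}) - |C^{\text{ex}}_B|$.

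For $g_A^X$: any $X$-type stabilizer is a product $\prod_{j\in T}\overline{\mathcal X}_j = \prod_{s\in\bigcup_{j\in T}A^j}\mathcal X_s$ over some set $T$ of colors. Such a product acts on an edge $e$ by the parity of the number of its endpoints whose star is included, so it is supported on $A$ precisely when every edge leaving $A$ has both or neither endpoint included, a domain-wall condition. A bulk-exclusive color ($A^j\subseteq B$) has all of its stars' legs inside $A$ and may be toggled freely; conversely, including any color touching $\partial A$ or the exterior forces, through the outward edges, the inclusion of the entire (infinite) exterior, which is not a legitimate finite operator. Hence the admissible $T$ are exactly the subsets of $C^{\text{ex}}_B$, giving $g_A^X = |C^{\text{ex}}_B|$.

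For $g_A^Z$ I would first characterize all $Z$-type stabilizers. A $Z$-configuration has a well-defined vertex boundary (its odd-degree vertices); plaquettes have empty boundary and each string $Z_{s_i,s_{i+1}}$ has boundary $\{s_i,s_{i+1}\}\subseteq A^j$. Since the plane has trivial topology, a $Z$-configuration is a stabilizer iff its boundary meets every color class $A^j$ in an even number of vertices. Introducing, for each color $j$, the linear functional $\phi_j(c)=|\partial c\cap A^j|\bmod 2$ on the binary space of $Z$-configurations supported on $A$, the $Z$-stabilizers in $A$ are exactly $\bigcap_j\ker\phi_j$, so $g_A^Z = |A| - \operatorname{rank}\{\phi_j\}$. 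Only colors present in $A$ give a nonzero $\phi_j$, so $\operatorname{rank}\{\phi_j\}\le |C_A|$; and since each vertex carries exactly one color, $\sum_{j\in C_A}\phi_j=0$, yielding one relation and the bound $\operatorname{rank}\{\phi_j\}\le |C_A|-1$.

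The crux, and the step I expect to be the main obstacle, is showing this is the \emph{only} relation, i.e.\ $\operatorname{rank}\{\phi_j\}=|C_A|-1$. For a proper nonempty $S\subsetneq C_A$ I must exhibit a configuration with $\sum_{j\in S}\phi_j=1$; the single-edge configuration $\{e\}$ works whenever $e$ has one endpoint colored in $S$ and the other not, since then $\sum_{j\in S}\phi_j(\{e\})=1$. Such an edge exists because the vertices of the rectangle are connected through edges of $A$ and both color sets $S$ and $C_A\setminus S$ occur among them, so some $A$-edge must straddle the partition. This is exactly where the smooth-boundary hypothesis must be handled with care (interior stars lie in $A$ while boundary stars do not, and the rectangle's vertices form a connected grid through $A$-edges). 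Combining the pieces gives $S_A = |A| - g_A^X - g_A^Z = \operatorname{rank}\{\phi_j\} - |C^{\text{ex}}_B| = (|C_A|-1) - |C^{\text{ex}}_B| = |C_A\setminus C^{\text{ex}}_B| - 1$, using $C^{\text{ex}}_B\subseteq C_A$.
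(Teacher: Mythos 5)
Your proof is correct, and while it starts from the same stabilizer identity $S_A = n_A - \dim G_A$ as the paper, the way you evaluate $\dim G_A$ is genuinely different in its key step. The paper exhibits an explicit generating set for $G_A$ --- plaquettes, intra-color strings, and bulk-exclusive $\overline{\mathcal X}_j$'s --- asserts its completeness (``it is easy to see''), and then counts each piece with lattice formulas ($n_A = 2l_xl_y + l_x + l_y$, $n_\square = l_xl_y$, $n_S = \sum_{c}(m_{c,A}-1)$), letting the geometry-dependent terms cancel at the end. Your $X$-sector computation coincides in substance with the paper's $n_+ = |C^{\text{ex}}_B|$, but you actually supply the domain-wall argument for why only bulk-exclusive colors can contribute, which the paper leaves implicit. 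Your $Z$-sector computation is the real departure: instead of enumerating generators (primal counting), you characterize the full space of $Z$-type stabilizers by the parity constraints $\phi_j$ and apply rank--nullity, reducing everything to showing $\operatorname{rank}\{\phi_j\} = |C_A| - 1$ via the single relation $\sum_j \phi_j = 0$ and a straddling-edge construction. What this buys you is twofold: the explicit $l_x, l_y$ bookkeeping disappears (the $|A|$'s cancel structurally, so the argument extends to any region whose vertices are edge-connected within it, not just rectangles), and the completeness of the stabilizer description inside $A$ --- the one point the paper asserts without proof --- is established rather than assumed. What the paper's route buys is brevity and concreteness: once the generating set is granted, the proof is a four-line count. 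Both arguments share the same looseness about the infinite-plane setting (finiteness of color clusters is needed both for your forcing-to-infinity step and for the paper's claim that a non-bulk-exclusive $\overline{\mathcal X}_j$ has support in $A^c$), so neither is more rigorous on that point.
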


\begin{proof}
Let $A$ be a $l_x \times l_y$ rectangular region. By an abuse of notation, we use $A$ to denote the set of qubits which reside inside region $A$ as well. We prove this claim by direct calculation of the entanglement. For stabilizers states we have
  \begin{align}\label{brute}
  S_A=n_A-\dim G_A,
  \end{align}
where $n_A$ is the number of qubits in $A$ and $G_A$ is the subgroup of stabilizers which act trivially on the qubits in $A^c$. It is easy to see that $G_A$ can be generated by $Z$ plaquette stabilizers, $Z_{i,j}$ string stabilizers and $\overline{\mathcal{X}}_j$ stabilizers (see Proposition \ref{prp_colorCoding} for their definition) which are themselves contained in $A$. Let $n_{\square}$, $n_S$ and $n_{+}$ denote the number of independent $Z$ plaquette stabilizer, $Z_{i,j}$ string stabilizer and $\overline{\mathcal{X}}_j$
 stabilizers which are contained entirely within $A$, respectively. Then we have,
\begin{equation}
  S_A=n_A-n_\square-n_+-n_S
\end{equation}
Now we compute each quantity separately. finding $n_A$ and $n_\square$ is quite easy,
\begin{align}
  n_A&=2l_xl_y+l_x+l_y,\\
  n_\square&=l_x l_y.
\end{align}
Also we have $n_+=|C^\text{ex}_B|$, because any $\overline{\mathcal{X}}_j$ operator which has a star operator outside $B$ has nontrivial support in $A^c$. For a given color $c$, define $m_{c,A}$ to be the number of squares in $A$ with color $c$. Then we can write $n_S$ as,
\begin{equation}
  n_S=\sum_{c\in C_A} (m_{c,A}-1)=\sum_{c\in C_A} m_{c,A}-|C_A|.
\end{equation}
Now, the $\sum_{c\in C_A} m_{c,A}$ sum is just the total number of squares  inside $A$, which is $(l_x+1)(l_y+1)$. Thus we have
\begin{equation}
  n_S=l_x l_y+l_x+l_y+1-|C_A|.
\end{equation}
Putting everything together we get,
\begin{equation}
  S_A=|C_A|-|C^\text{ex}_B|-1=|C_A \setminus C^\text{ex}_B|-1,
\end{equation}
where in the last part we used the fact that $C^\text{ex}_B \subseteq C_A$.
\end{proof}

\subsection{Duality Mappings}\label{duality}
\begin{figure}
  \centering \includegraphics[width=0.7\columnwidth]{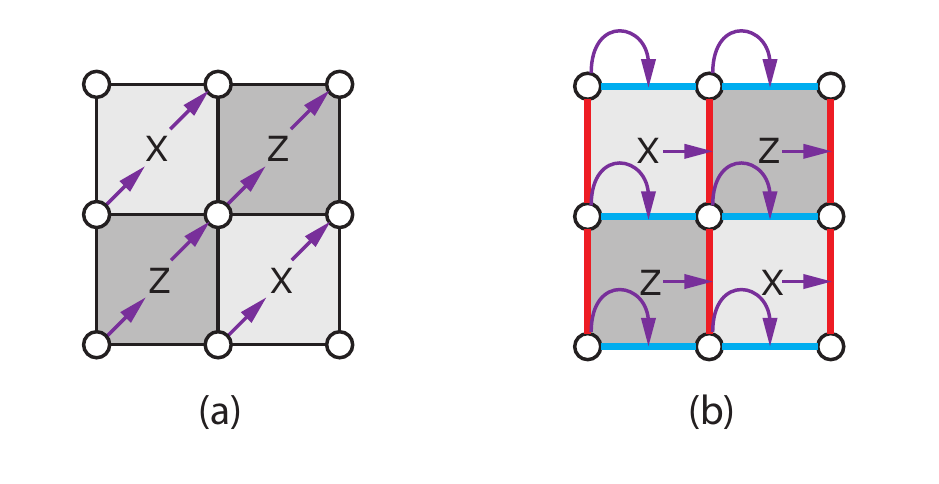}
  \caption{The diagrammatic illustration of the action of the duality map (a) $U$ defined by Eqs.\eqref{eq_duality_map} and (b) $\tilde{U}$ defined by Eqs.\eqref{eq_duality_map_bacon}. The white dots correspond to single qubit $Y$ operators and squares correspond to $g$ stabilizers. The vertical red lines and horizontal blue lines in panel b represent $XX$ and $ZZ$ operators respectively}
  \label{fig_duality_map}
\end{figure}
Here, we present the two dualities related to the $p_z=0$ line of the projective random circuits, one mapping the circuit at $p_y$ to the one at $1-p_y$ and the other mapping the $g$ and $Y$ operators to the gauge operators of the Bacon-Shor code \cite{xu2004strong, nussinov2015compass}. We start with the first one. Similar to previous section, we consider the infinite plane geometry.

Consider the Clifford unitary $U$ transforming stabilizer and $Y$ operators in the following way
\begin{align}
  & U^\dagger~Y_{i,j}~U= g_{i,j}\nonumber \\
  & U^\dagger~g_{i,j}~U= Y_{i+1,j+1},\label{eq_duality_map}
\end{align}
where $g_{i,j}$ is defined in the main text, and $i,j \in \mathbb{Z}$. Diagrammatically, it acts as a half-translation in both $i$ and $j$ directions (see Fig. \ref{fig_duality_map}a) such that $U^2$ is just the lattice translation $(i,j)\mapsto (i+1,j+1)$.

It is straightforward to verify that this transformation yields the right commutation relations for the images of $g$ and $Y$ operators. Clearly, the images of $Y$ operators commute among themselves, and similarly for the images of the $g$ operators. On the other hand, note that a stabilizer operator $g$ anti-commutes only with the four single qubit $Y$ operators acting on its corners. The $U$ transformation maps $g$ to a single qubit $Y$ operator and maps the four $Y$ operators to the four neighboring stabilizers, keeping the anti-commutation relations.

To uniquely specify the unitary $U$, one has to define its action on a complete basis of Pauli strings. The set of $Y$ and $g$ operators is not a complete basis for Pauli strings on an infinite plane and as such, the transformation in \eqref{eq_duality_map} does not fully specify $U$. However, since the projective random quantum circuits at $p_z=0$ are only comprised of $g$ and $Y$ measurements, no matter how one extends Eq.\eqref{eq_duality_map} to a complete basis, the Clifford $U$ maps a projective random quantum circuit chosen with probability distribution corresponding to $p_z=0$ and $p_y=p$ to a projective random quantum circuit chosen according to the probability distribution corresponding to $p_z=0$ and $p_y=1-p$.

Moreover, if the stabilizer set describing the state of the system is generated only by operators comprised of $g$ and $Y$ operators, Eq.\eqref{eq_duality_map} is enough to specify the image of the wave function under $U$ transformations. It also ensures that $U$ keeps the local entanglement structure of the state intact, i.e. changing the entanglement of a region by at most a term proportional to the region's area. These considerations then enforce the $p_z=0$ line of the phase diagram to be symmetric around $p_y=0.5$ point.

The second duality maps the $Y$ and $g$ operators to the gauge operators of the Bacon-Shor code. More specifically, consider the Clifford unitary $\tilde{U}$ which transforms the stabilizer and $Y$ operators as
\begin{align}
  & \tilde{U}^\dagger~Y_{i,j}~\tilde{U}= Z_{i,j}Z_{i+1,j}\nonumber \\
  & \tilde{U}^\dagger~g_{i,j}~\tilde{U}= X_{i+1,j}X_{i+1,j+1}.\label{eq_duality_map_bacon}
\end{align}
It is illustrated diagrammatically in Fig. \ref{fig_duality_map}b. As before, we consider the infinite plane geometry.  It is easy to verify that $\tilde{U}$ preserves the commutation relations and hence could be extended to a complete unitary. We note that the measurements of $g$ and $Y$ operators, when viewed in the dual picture, resembles the syndrome measurements of the Bacon-Shor subsystem code in the active error correction scheme.

Closely related to the Bacon-Shore code is the quantum compass model on a square lattice defined via the following Hamiltonian,
\begin{equation}
  H=-J_Z\sum_{i,j} Z_{i,j} Z_{i+1,j} -  J_X\sum_{i,j}X_{i,j}X_{i,j+1}.
\end{equation}
Interestingly, tuning the coupling ratio $J_Z/J_X$ drives the system through a first order phase transition at $J_Z=J_X$\cite{PhysRevB.72.024448,PhysRevB.75.144401}.

Given the close relation between the quantum compass model and the projective random quantum circuit along $p_z=0$ line, and noting the apparent divergence of order parameters at the self-dual critical point in the thermodynamic limit, one might suspect that the transition at $p_y=0.5$ is also a first order phase transition. One way to test this hypothesis, is to look at a closely related quantum random circuit for which, at $(p_z,p_y)=(0,p_y)$, the ratio of the number of single qubit-$Y$ measurements to total measurements is fixed to be exactly $p_y$ rather than having a Gaussian distribution centered on $p_y$ with a $1/\sqrt{N}$ variance (which is the case for the original projective random circuit model.)  If it was a first order transition, one would expect a much sharper transition in the modified circuit model, because the $1/\sqrt{N}$ variance which would round out the transition at finite $N$, has been eliminated. Fig.\ref{fig_firstorder}a shows the TEE as a function of $p_y$ along the $p_z=0$ in the modified circuit with fixed ratio of measurements. The analogous result in the original circuit model is shown Fig.\ref{fig_firstorder}b for comparison. As is evident from the figures, there is not any noticeable difference between the two, which suggests that the phase transition at $p_y=0.5$ is not a first order transition.

\begin{figure}
  \includegraphics[width=\columnwidth]{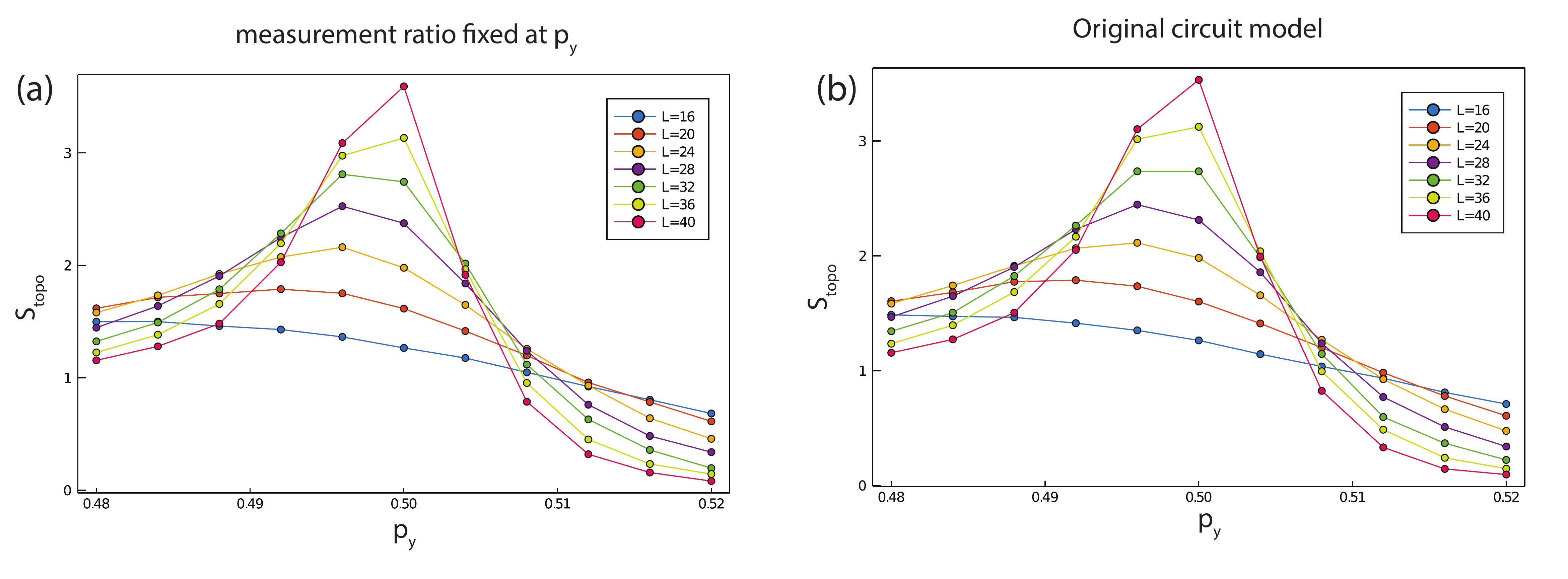}
  \caption{a) TEE as a function of $p_y$ along $p_z=0$ line of the projective random circuit, when the ratio of Pauli-$Y$ measurements is fixed to be $p_y$ at each random circuit realization. b) The analogous result for the original circuit}
  \label{fig_firstorder}
\end{figure}

\subsection{Jacobi-$\theta$ functions and Dedekind-$\eta$ function}\label{apx_theta_eta}

The Jacobi theta functions are defined for two complex numbers $z$ and $\tau$, where $\tau$ is assumed to lie in the upper half plane. For a given $\tau$, the squared nome $q$ is defined as
\begin{equation}
  q(\tau)=e^{2\pi i \tau}.
\end{equation}
In terms of $q$ and $z$, the Jacobi-$\theta$ functions are defined as
\begin{align}
  \theta_1(z|\tau)=&\sum_{n \in \mathbb{Z}}(-1)^{n-1/2}q^{\frac{1}{2}(n+1/2)^2}e^{i(2n+1)z},\\
  \theta_2(z|\tau)=&\sum_{n \in \mathbb{Z}}q^{\frac{1}{2}(n+1/2)^2}e^{i(2n+1)z},\\
  \theta_3(z|\tau)=&\sum_{n \in \mathbb{Z}}q^{\frac{1}{2}n^2}e^{2inz},\\
  \theta_4(z|\tau)=&\sum_{n \in \mathbb{Z}}(-1)^n q^{\frac{1}{2}n^2}e^{2inz},\\
\end{align}
and the Dedekind-$\eta$ function is given as
\begin{align}
  \eta(\tau)=q^{1/24} \prod_{k=1}^\infty (1-q^k).
\end{align}
The one variable $\theta_\nu(\tau)$ function is defined as,
\begin{equation}
  \theta_\nu(\tau)=\theta_\nu(0|\tau), \qquad \nu=1,2,3,4.
\end{equation}
In particular, we have:
\begin{equation}\label{eq_theta3}
  \theta_3(\tau)=\sum_{n \in \mathbb{Z}}q^{\frac{1}{2}n^2}.
\end{equation}
To understand the the asymptotic form of the entanglement function $S^\text{QLM}(x)$ for $x\ll L$, we need to expand the function $J(u)$,
\begin{equation}
  J_\lambda(u)=\log\qty(\frac{\theta_3(i \lambda u) \theta_3(i \lambda (1-u))}{\eta(2iu)\eta(2i(1-u))})
\end{equation}
for small $u$. To that end, we use the following identities, related to the modular transformation properties of the Jacobi $\theta$ functions
\begin{align}
  \theta_3(\tau)=&(-i \tau)^{-1/2} \theta_3(-1/\tau)\\
  \eta(\tau)=&(-i\tau)^{-1/2} \eta(-1/\tau).
\end{align}
Using the above identities, it is easy to see that for $\lambda u \ll 1$,
\begin{align}
  \theta_3(i\lambda u)\approx &(\lambda u)^{-1/2}\\
  \eta(2iu)\approx &(-2u)^{-1/2} e^{-\frac{\pi}{24 u}},
\end{align}
which results in the $1/u$ leading term in the expansion of $J_\lambda(u)$ at $\lambda u\ll 1$,
\begin{equation}
  J_\lambda(u)\approx \frac{\pi}{24}u^{-1}+\text{const.}
\end{equation}
Note that the coefficient of the leading term is independent of $\lambda$.

The Nemo package\cite{nemo} was used for numerical evaluation of $\theta_3$ and $\eta$ functions in this study.

\section{Estimating Errors}

We briefly review the procedure which was utilized in this study to find critical probabilities and critical exponents and explain how their errors were estimated. In general ancilla order parameter yields smoother diagrams and needs less averages, and hence we mostly use $S_a$ to extract exponents.

Consider a quantity $S$ which has the following expected scaling form
\begin{equation}\label{eq_generic_scaling_form}
  S(p,L)=F((p-p_c)L^{1/\nu}),
\end{equation}
with $F(x)$ an unknown function. This is the expected scaling form for $S_\text{topo}(p,L)$, assuming $\gamma=0$, as well as $S_a$, when measured at $t\propto L^z$. To find the value of $p_c$ and $\nu$ that result in the best data collapse, we use the objective function $\epsilon(p_c,\nu)$ which is defined as
\begin{equation}\label{eq_objfunction}
  \epsilon(p_c,\nu)=\sum_{i=2}^{n-1}(y_i-\bar y_i)^2,
\end{equation}
where
\begin{equation}\label{eq_ybar}
  \bar y_i=y_{i-1}+\frac{y_{i+1}-y_{i-1}}{x_{i+1}-x_{i-1}}(x_i-x_{i-1}),
\end{equation}
with $y_i=S(p_i,L_i)$, $x_i=(p_i-p_c)L_i^{1/\nu}$ and $n$ is the the total number of data points. The index $i$ enumerates over all data points sorted such that $x_1<x_2<\cdots<x_n$. $\bar y_i$, as defined in Eq.\eqref{eq_ybar}, is the expected value of $y_i$ when it lies on the line passing through $(x_{i-1},y_{i-1})$ and $(x_{i+1},y_{i+1})$. Hence, for a perfect collapse and in the limit of infinite data points, one gets $\epsilon(p_c,\nu)=0$. Accordingly, to find the best collapse, we find  $p_c^\ast$ and $\nu^\ast$
values  which minimize the objective function $\epsilon$. To estimate the error in determining $p_c^\ast$ and $\nu^\ast$, we find the interval around $(p_c^\ast,\nu^\ast)$, outside of which $\epsilon(p_c,\nu) > 2\, \epsilon(p_c^\ast,\nu^\ast)$ (see Fig.\ref{fig_heatmap}).

\begin{figure}
  \includegraphics[width=\columnwidth]{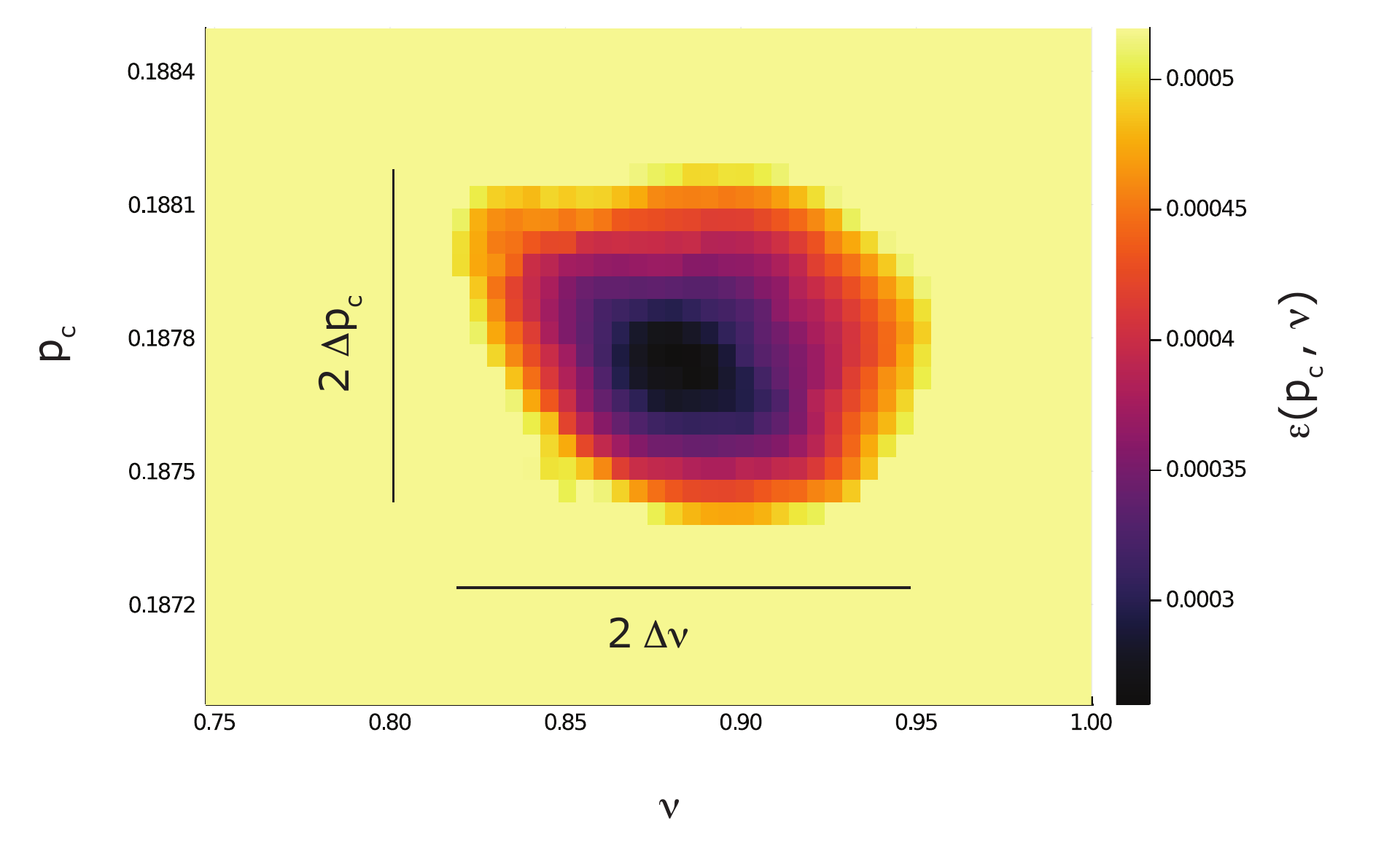}
  \caption{The objective function as a function of $p_c$ and $\nu$ associated with the ancilla entropy data points near the percolation critical point. Black denotes the minimum value $\epsilon(p_c^\ast,\nu^\ast)$ and the light yellow correspond to values greater or equal to $ 2\, \epsilon(p_c^\ast,\nu^\ast)$. The corresponding error intervals are specified on the plot.}
  \label{fig_heatmap}
\end{figure}

In case the scaling form has an extra scaling pre-factor, i.e.
\begin{equation}
  S(p,L)=L^\gamma \,F((p-p_c)L^{1/\nu}),
\end{equation}
we use basically the same procedure, but with $y_i$ defined as $y_i=S(p,L)/L^\gamma$ and using a slightly different objective function, defined as
\begin{equation}
  \tilde \epsilon(p_c,\nu,\gamma)=\sum_{i=2}^{n-1}(\log(y_i)-\log(\bar y_i))^2.
\end{equation}
We use $\log(y_i)$ instead of $y_i$ to make the objective function invariant under an overall rescaling of $y_i$. Otherwise, one can make the objective function in Eq.\eqref{eq_objfunction} arbitrarily small by choosing large enough $\gamma$.

Finally, the error bars for the best fit values of $a$, $b$ and $\lambda$ parameters correspond to the standard errors found via least square fitting which reflects the quality of the fit.

\clearpage
\widetext
\section{Supplementary Figures}\label{apx_supfig}
\begin{figure}[!h]
  \includegraphics[width=\textwidth]{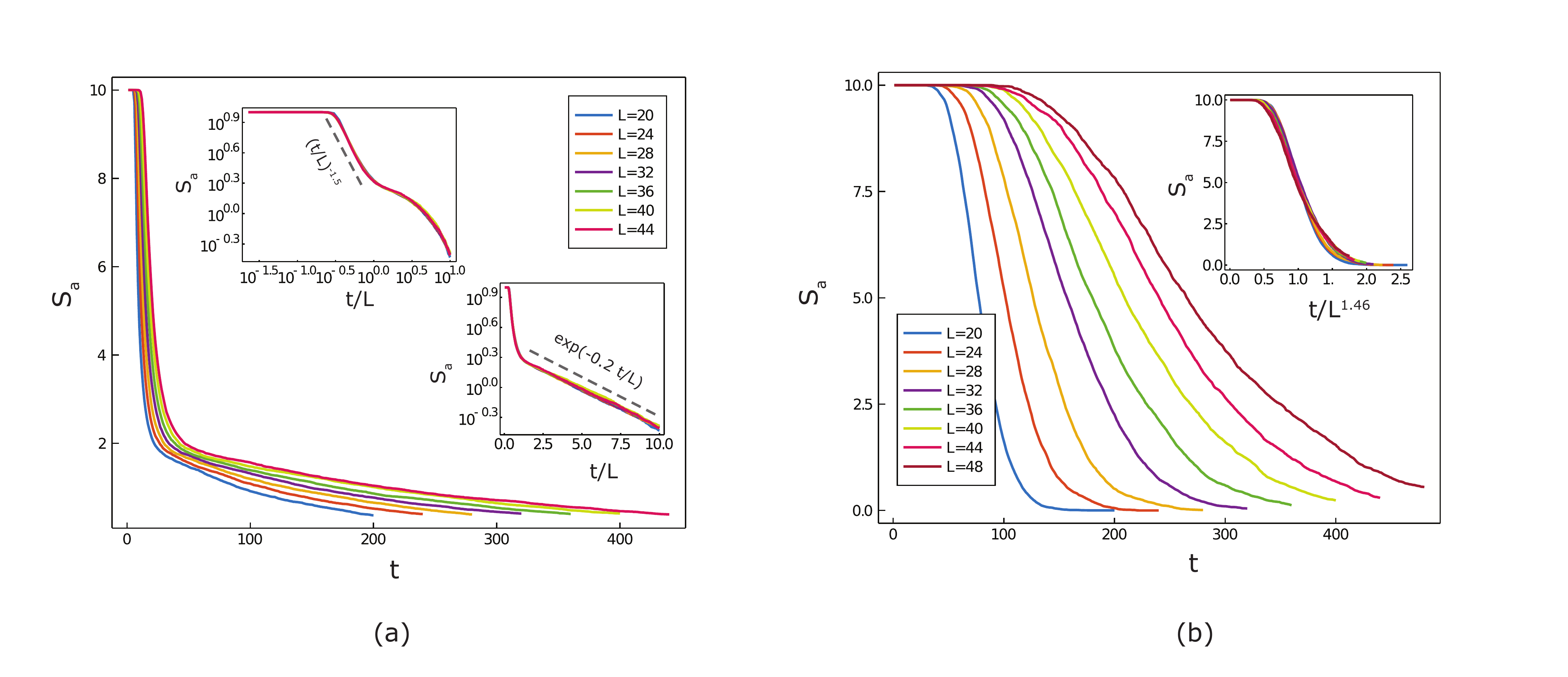}
  \caption{(a) The ancilla entropy versus time at the percolation critical point, for different system sizes. The insets are the same quantity as a function of $t/L$ in log-log and semi-log scales, showing that $S_a$ is a function of $(t/L)$ at the critical point in agreement with $z=1$. Similar to $(1+1)$D circuits, the ancilla order parameter initially decays as a power law until $t=O(L)$, and then falls off exponentially. (b) the ancilla entropy versus time at the self-dual critical point $(p_z,p_y)=(0,0.5)$ which shows a quite different time-dependence at the self-dual critical point compared to the percolation critical point. By plotting $S_a$ for different system sizes as a function of $t/L^z$ for various $z$, we find $z=1.46$ results in the best collapse (shown in the inset), suggesting that the dynamical critical exponent at the self-dual critical point is $z=1.46(8)$.  }
  \label{fig_z}
\end{figure}

\begin{figure}[h]
  \centering \includegraphics[width=0.5\textwidth]{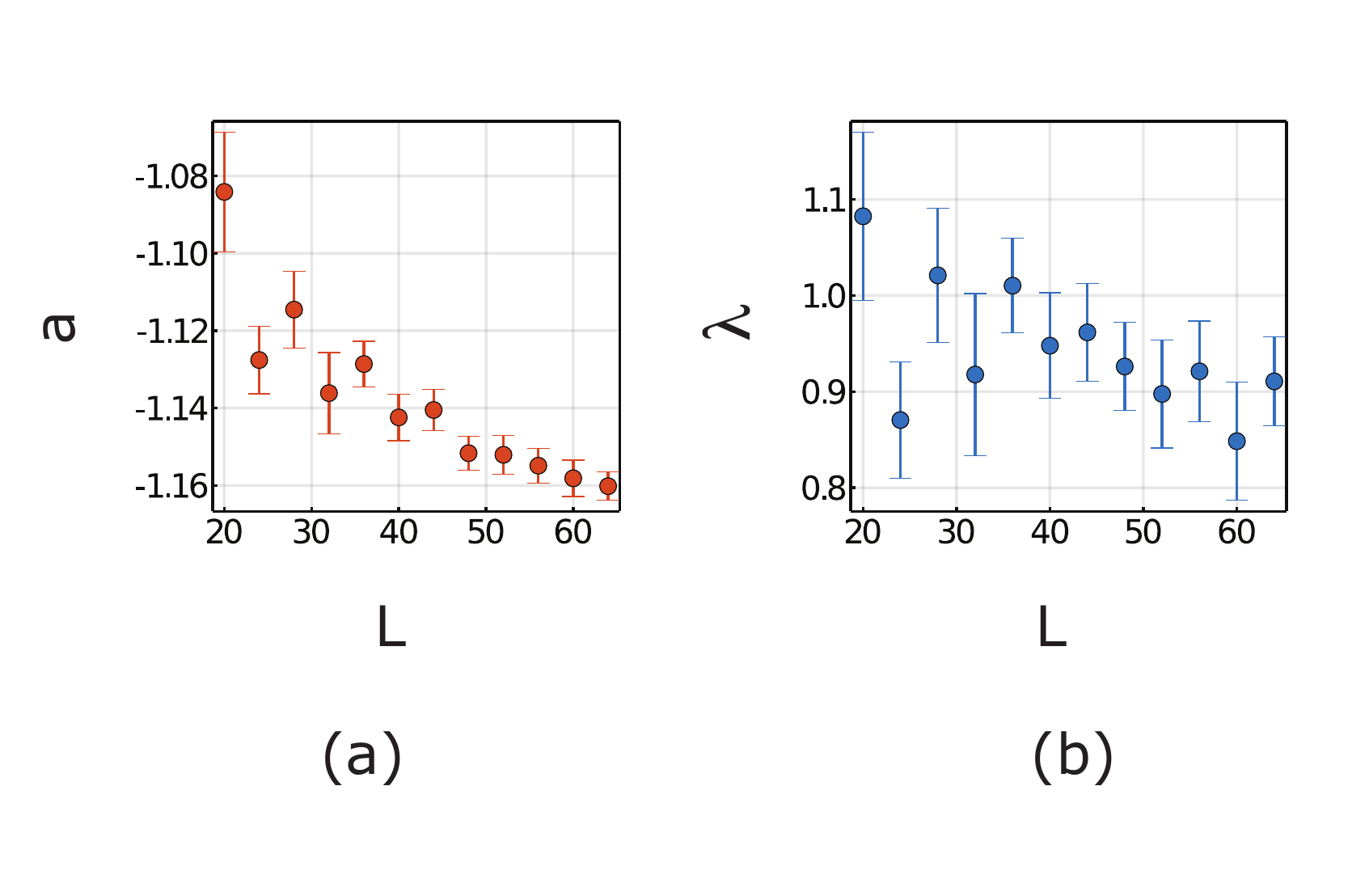}
  \caption{The best fit value of the (a) $a$ and (b) $\lambda$ parameters of the $S^\text{qlm}(x)$ scaling function at the percolation crtitical point (on the $p_z$ axis for the projective and hybrid random circuits described in the main text (i.e. $p_y  = 0$ or $p_u = 0$, respectively). For the sub-leading term in $S^\text{qlm}(x)$ to be universal, $a$ and $\lambda$ have to be independent of the system size. Although we observe a weak dependence of the $a$ and $\lambda$ on $L$, it could be either due to inherent error in pinpointing the exact location of the critical point or the finite size effects, in which case $a$ and $\lambda$ should plateau at large $L$s. Moreover note that $\lambda$ parameter has rather large error bars, which is related to the nonlinear dependence of the $S^\text{qlm}$ form on $\lambda$. }
  \label{fig_aandlmab}
\end{figure}

\begin{figure}
\includegraphics[width=\textwidth]{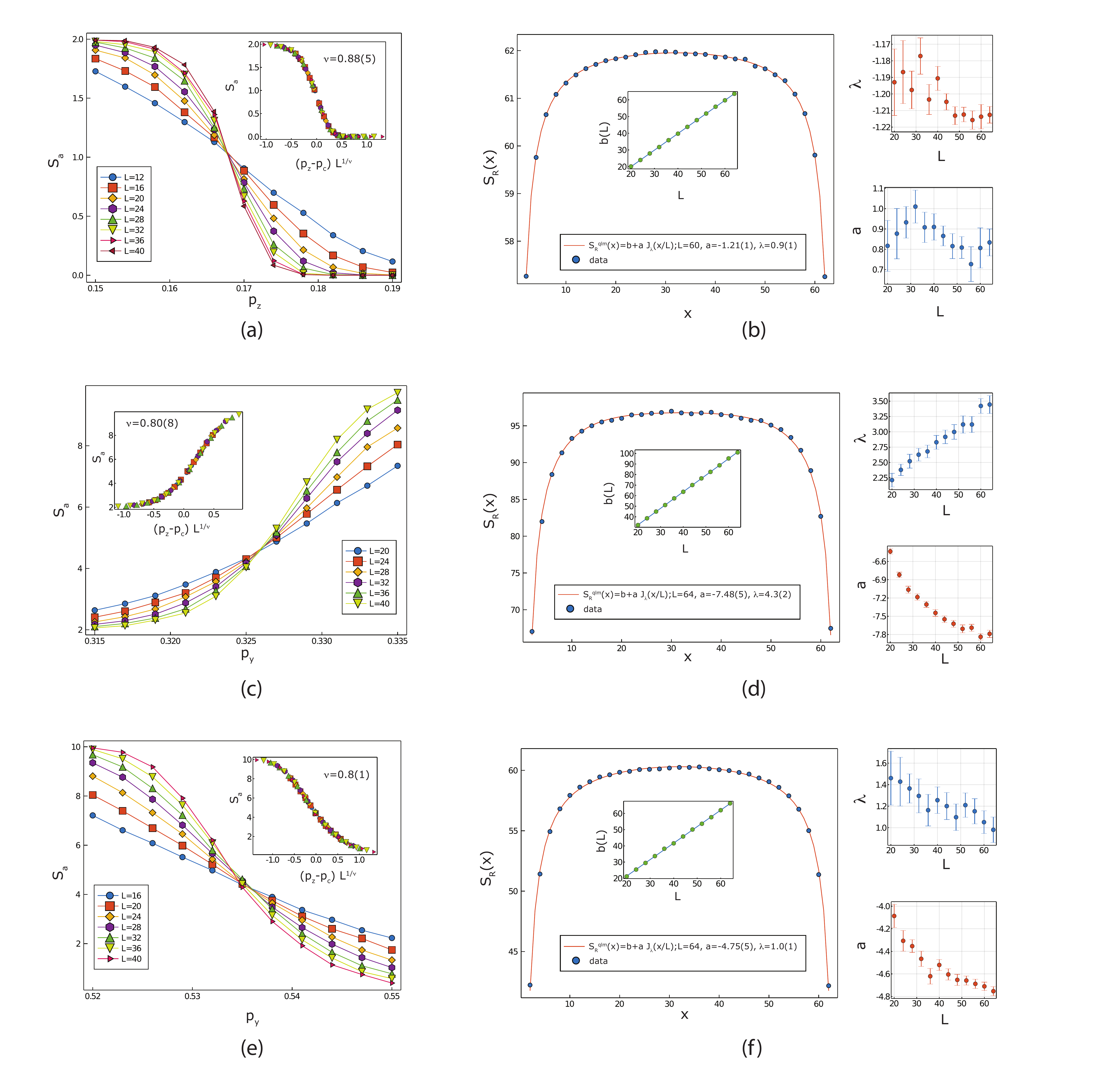}
\caption{Additional plots for the projective random circuit, showing a typical transition on the boundary of topological phase and trivial phase (first row), topological phase and volume law phase (second row) and volume law phase and trivial phase (third row).  (a) the ancilla order parameter measured at $t=5L$ versus $p_z$ for fixed $p_y=0.1$ near the topological to trivial phase transition with the corresponding collapse with $p_c=0.168(2)$ and $\nu=0.88(5)$ shown in the inset. (b) The entanglement of the cylindrical region, $S_R(x)$, as a function of x at $(p_z,p_y)=(0.168,0.1)$ for $L=64$, with best fit of scaling function $S^\text{qlm}$. The inset shows the $L$ dependence of the $b$ parameter in $S^\text{qlm}$ scaling function. On the right, is the $L$ dependence of the best fit values of $a$ and $\lambda$
 parameters versus $L$. (c) $S_a$ measured at $t=L$ versus $p_y$ at fixed $p_z=0.1$ showing the phase transition from topological phase to volume law. The inset shows the corresponding data collapse with $p_c=0.326(1)$ and $\nu=0.80(8)$. (d)  $S_R(x)$ as a function of x at $(p_z,p_y)=(0.1,0.326)$ with the best fit of $S^\text{qlm}$ scaling function, with the inset, top right and bottom right panels showing the $L$ dependences of best fit values of the parameters $b$, $\lambda$ and $a$ respectively. (e)$S_a$ measured at $t=3/2  \,L$
  versus $p_y$ at fixed $p_z=0.1$ showing the phase transition from volume law phase to trivial. The inset shows the corresponding data collapse with $p_c=0.535(1)$ and $\nu=0.8(1)$. (f)$S_R$ at $(p_z,p_y)=(0.1,0.535)$ with best fit of scaling function $S^\text{qlm}$, the best fit value of $b$ as a function of $L$ (inset) and the best fit values of $a$ and $\lambda$ versus $L$ on the right.}
\label{fig_yzmiddle}
\end{figure}

\begin{figure}
\includegraphics[width=\textwidth]{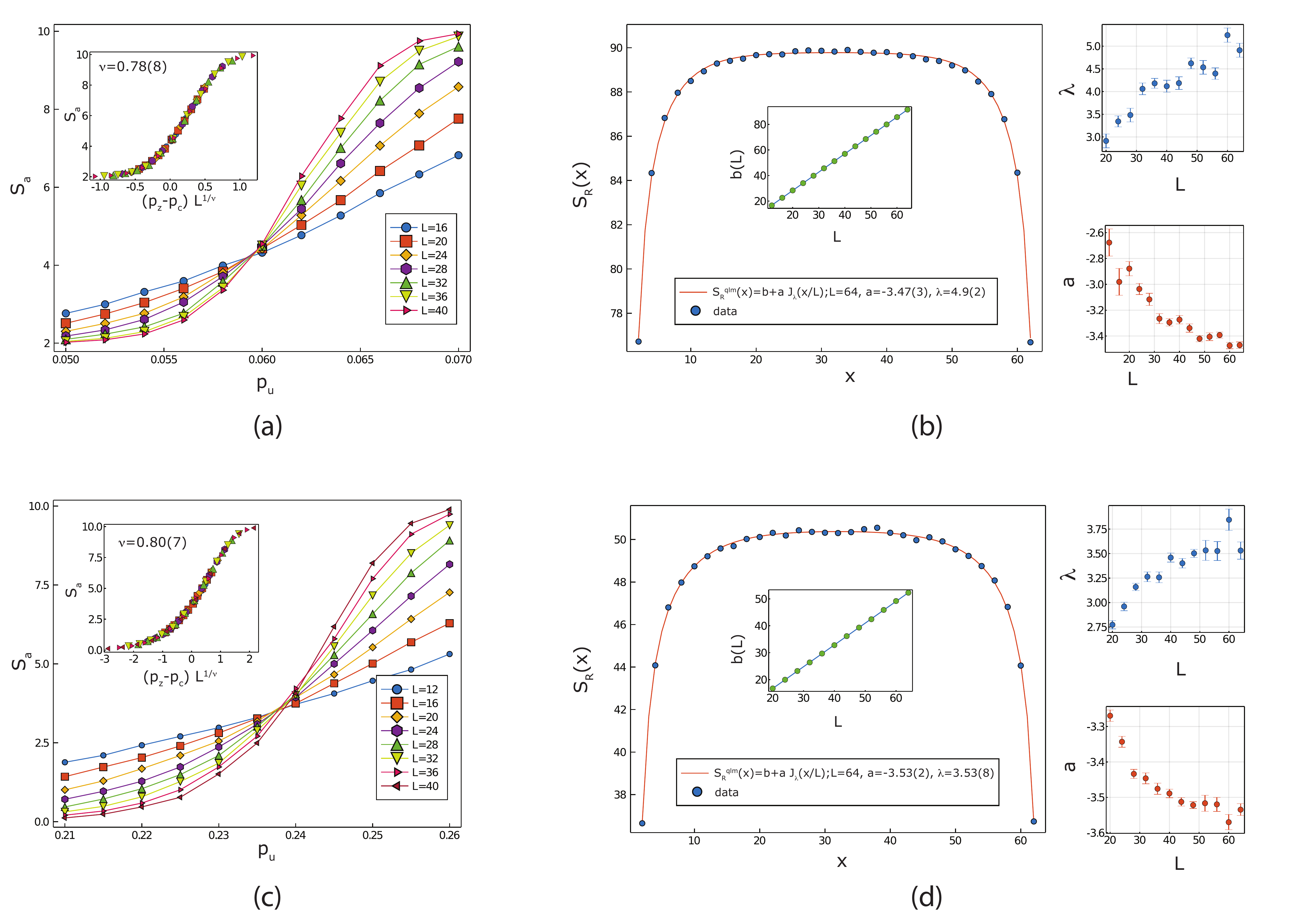}
\caption{Additional plots for the hybrid random circuit corresponding to the transitions from the topological phase to the volume law phase along the $p_z=0$ line (first row) and the phase transition from  trivial phase to volume law phase along the hypotenuse $p_z+p_u=1$ (second row).  (a) the ancilla order parameter measured at $t=3/4\, L$ versus $p_u$ along the $p_z=0$ axis near the topological to volume law phase transition with the corresponding collapse with $p_c=0.059(1)$ and $\nu=0.78(8)$ shown in the inset. (b) The entanglement of the cylindrical region, $S_R(x)$, as a function of x at $(p_z,p_u)=(0,0.059)$ for $L=64$, with best fit of scaling function $S^\text{qlm}$. The inset shows the $L$ dependence of the $b$ parameter in $S^\text{qlm}$
scaling function. The $L$ dependence of the best fit values of $a$ and $\lambda$
 parameters are shown on the right. (c) $S_a$ measured at $t=L$ versus $p_u$ along the $p_z+p_u=1$ line, near the phase transition from trivial phase to volume law phase. The corresponding data collapse with $p_c=0.238(2)$ and $\nu=0.80(7)$ is shown in the inset. (d)  $S_R(x)$ as a function of x at $(p_z,p_u)=(0.762,0.238)$ with the best fit of $S^\text{qlm}$ scaling function. The inset shows the $L$ dependence of the $b$ parameter in $S^\text{qlm}$ scaling function, while the dependence of the best fit values of $a$ and $\lambda$ parameters is shown on the right. }
\label{fig_uz}
\end{figure}


\end{document}